\newtheorem{theorem}{Theorem}
\providecommand{\href}[2]{#2}
\definecolor{darkred}{rgb}{0.5,0.0,0.0}
\definecolor{darkblue}{rgb}{0.0,0.0,0.9}
\definecolor{darkerblue}{rgb}{0.0,0.0,0.5}
\definecolor{darkgreen}{rgb}{0.0,0.5,0.0}
\definecolor{black}{rgb}{0.0,0.0,0.0}
\definecolor{brown}{rgb}{0.6,0.4,0.2}
\DeclareRobustCommand{\Sec}[1]{Sec.~\ref{#1}}
\DeclareRobustCommand{\Secs}[2]{Secs.~\ref{#1} and \ref{#2}}
\DeclareRobustCommand{\Fig}[1]{Fig.~\ref{#1}}
\DeclareRobustCommand{\Eq}[1]{Eq.~(\ref{#1})}
\DeclareRobustCommand{\Eqs}[2]{Eqs.~(\ref{#1}) and (\ref{#2})}
\DeclareRobustCommand{\Ref}[1]{Ref.~\cite{#1}}
\DeclareRobustCommand{\Refs}[1]{Refs.~\cite{#1}}
\title{\boldmath Classification without labels: \\ Learning from mixed samples in high energy physics}
\preprint{ 
\begin{flushright}
MIT--CTP 4922
 \end{flushright}}
\author[a]{Eric M. Metodiev,}
\author[b]{Benjamin Nachman,}
\author[a]{and Jesse Thaler}
\affiliation[a]{Center for Theoretical Physics, Massachusetts Institute of Technology, Cambridge, MA 02139, USA}
\affiliation[b]{Physics Division, Lawrence Berkeley National Laboratory, Berkeley, CA 94720, USA}
\emailAdd{metodiev@mit.edu}
\emailAdd{bpnachman@lbl.gov}
\emailAdd{jthaler@mit.edu}
\abstract{
Modern machine learning techniques can be used to construct powerful models for difficult collider physics problems.  In many applications, however, these models are trained on imperfect simulations due to a lack of truth-level information in the data, which risks the model learning artifacts of the simulation.  In this paper, we introduce the paradigm of classification without labels (CWoLa) in which a classifier is trained to distinguish statistical mixtures of classes, which are common in collider physics.  Crucially, neither individual labels nor class proportions are required, yet we prove that the optimal classifier in the CWoLa paradigm is also the optimal classifier in the traditional fully-supervised case where all label information is available.  After demonstrating the power of this method in an analytical toy example, we consider a realistic benchmark for collider physics:  distinguishing quark- versus gluon-initiated jets using mixed quark/gluon training samples.  More generally, CWoLa can be applied to any classification problem where labels or class proportions are unknown or simulations are unreliable, but statistical mixtures of the classes are available.
}
\begin{document} 
\maketitle
\flushbottom

\section{Introduction}
\label{sec:intro}

In the data-rich environment of the Large Hadron Collider (LHC), machine learning techniques have the potential to significantly improve on many classification, regression, and generation problems in collider physics.  There has been a recent surge of interest in applying deep learning and other modern algorithms to a wide variety of problems, such as jet tagging~\cite{Cogan:2014oua,Almeida:2015jua,deOliveira:2015xxd,Baldi:2016fql,Barnard:2016qma,Kasieczka:2017nvn,Komiske:2016rsd,deOliveira:2017pjk,Komiske:2017ubm,Louppe:2017ipp,ATLAS-CONF-2017-064,ATL-PHYS-PUB-2017-013,ATL-PHYS-PUB-2017-004,CMS-DP-2017-005,CMS-DP-2017-013,ATL-PHYS-PUB-2017-003,Butter:2017cot,Pearkes:2017hku,Datta:2017rhs,ATL-PHYS-PUB-2017-017,CMS-DP-2017-027}. Despite the power of these methods, they all currently rely on significant input from simulations.  Existing multivariate approaches for classification used by the LHC experiments all have some degree of mis-modeling by simulations and must be corrected post-hoc using data-driven techniques~\cite{Aad:2015ydr,Chatrchyan:2012jua,Aad:2014gea,CMS:2013kfa,CMS-DP-2016-070,Aad:2015rpa,Khachatryan:2014vla,Aad:2016pux,CMS:2014fya}.  The existence of these \textit{scale factors} is an indication that the algorithms trained on simulation are sub-optimal when tested on data.  Adversarial approaches can be used to mitigate potential mis-modeling effects during training at the cost of algorithmic performance~\cite{Louppe:2016ylz}.  The only solution that does not compromise performance is to train directly on data.  This is often thought to not be possible because data is unlabeled.

In this paper, we introduce \textit{classification without labels} (CWoLa, pronounced ``koala''), a paradigm which allows robust classifiers to be trained directly on data in scenarios common in collider physics.  Remarkably, the CWoLa method amounts to only a minor variation on well-known machine learning techniques, as one can effectively utilize standard fully-supervised techniques on two mixed samples.  As long as the two samples have different compositions of the true classes (even if the label proportions are unknown), we prove that the optimal classifier in the CWoLa framework is the optimal classifier in the fully-supervised case.\footnote{After we developed this framework, we learned of a mathematically equivalent (but conceptually different) rephrasing of CWoLa in the language of learning from random noisy labels in \Ref{scott2013}, where a version of Theorem~\ref{th:optXYAB} also appears. See the discussion in \Sec{sec:cwola}.} 
In practice, after training the classifier on large event samples without using label information, the operating points of the classifier can be determined from a small sample where at least the label proportions are known.

The CWoLa paradigm is part of a broader set of classification frameworks that fall under the umbrella of \emph{weak supervision}.  These frameworks go beyond the standard fully-supervised paradigm with the goal of learning from partial, non-standard, or imperfect label information. See~\Ref{hernandez2015} for a recent review and comprehensive taxonomy.  Weak supervision was first applied in the context of high energy physics in~\Ref{Dery:2017fap} to distinguish jets originating from quarks from those originating from gluons using only class proportions during training; this paradigm is known as \textit{learning from 
label proportions} (LLP)~\cite{quadrianto2009,patrini2014almost}.  For quark versus gluon jet tagging, LLP was an important development because useful quark/gluon discrimination information is often subtle and sensitive to low-energy or wide-angle radiation inside jets, which may not be modeled correctly in parton shower generators~\cite{Gras:2017jty}.  The main drawback of LLP, however, is that there is still uncertainty in the quark/gluon labels themselves, since quark/gluon fractions are determined by matrix element calculations convolved with parton distribution functions, which carry their own uncertainties.  The CWoLa paradigm sidesteps the issue of quark/gluon fractions entirely, and only relies on the assumption that the samples used for training are proper mixed samples without contamination or sample-dependent labeling.

The ideas presented below may prove useful for a wide variety of machine learning applications, but for concreteness we focus on \textit{classification}.  It is worth emphasizing that the CWoLa framework can be applied to a huge variety of classifiers\footnote{CWoLA can be applied to train any classifier with a threshold that can be varied to sweep over operating points. $k$-nearest neighbors classification, for instance, does not have this property.} without modification to the training procedure, by simply training on mixed event samples instead of on pure samples.   By contrast, LLP-style weak supervision such as in \Ref{Dery:2017fap} requires a non-trivial modification to the loss function.\footnote{The recent study in \Ref{Cohen:2017exh}, which was initially inspired by the LLP paradigm, is actually performing weak supervision using the CWoLa approach.  We thank Timothy Cohen, Marat Freytsis, and Bryan Ostdiek for clarifications on this point.}  For this reason, CWoLa can be applied even for classifiers that are not trained in terms of loss functions at all.

Despite the power and simplicity of the CWoLa approach, there are some important limitations to keep in mind.  First, the optimality of CWoLa is only true asymptotically; for a finite training set and a realistic machine learning algorithm, there can be differences, as discussed more below.  Second, CWoLa does not apply when one class does not already exist in the data, as may be the case in a search for physics beyond the Standard Model (SM) with an exotic signature.  That said, if the new physics can be decomposed into SM-like components, such as different types of jets, then CWoLa may once again be possible.  Third, when the CWoLa strategy is employed for training in one event topology and testing in another event topology, there may be systematic uncertainties associated with the extrapolation.  Of course, this is also true for traditional fully-supervised classification, which may introduce residual dependence on simulation; indeed, one could even combine adversarial approaches with CWoLa in this case to mitigate simulation dependence~\cite{Louppe:2016ylz}.  Finally, the CWoLa approach presented here only applies to mixtures of two categories, and further developments would be needed to disentangle multicategory samples.

The remainder of this paper is organized as follows.  In \Sec{sec:withandwithoutlabels}, we explain the theoretical foundations of the CWoLa paradigm and contrast it with LLP-style weak supervision and full supervision.  We illustrate the power of CWoLa with a toy example of two gaussian random variables in \Sec{sec:toy}.  We then apply CWoLa to the challenge of quark versus gluon jet tagging in \Sec{sec:qg}, using a dense network of five standard quark/gluon discriminants to highlight the performance of CWoLa on mixed samples. The paper concludes in \Sec{sec:conc} with a summary and future outlook.

\section{Machine learning with and without labels}
\label{sec:withandwithoutlabels}

The goal of classification is to distinguish two processes from each other: signal $S$ and background $B$.  Let $\vec{x}$ be a list of observables that are useful for distinguishing signal from background, and define $p_S(\vec{x})$ and $p_B(\vec{x})$ to be the probability distributions of $\vec{x}$ for the signal and background, respectively.  A classifier $h:\vec{x}\mapsto \mathbb{R}$ is designed such that higher values of $h$ are more signal-like and lower values are more background-like.  A classifier operating point is defined by a threshold cut $h>c$; the signal efficiency is then $\epsilon_S=\int \mathrm{d}\vec{x} \, p_S(\vec{x}) \, \Theta(h(\vec{x})-c)$ and the background efficiency (i.e.~mistag rate) is $\epsilon_B=\int \mathrm{d}\vec{x} \, p_B(\vec{x})\, \Theta(h(\vec{x})-c)$, for the Heaviside step function $\Theta$.  The performance of a classifier $h$ can be described by its receiver operating characteristic (ROC) curve which is the function $1 - \epsilon_B^h(\epsilon_S)$.  A classifier $h$ is \textit{optimal} if for any other classifier $h'$, $\epsilon_B^{h'}(\epsilon_S)\geq \epsilon_B^h(\epsilon_S)$ for all possible $\epsilon_S$.  By the Neyman-Pearson lemma~\cite{nplemma}, an optimal classifier is the likelihood ratio: $h_\text{optimal}(\vec{x}) = p_S(\vec{x})/p_B(\vec{x})$.  Therefore, the goal of classification is to learn $h_\text{optimal}$ or any classifier that is monotonically related to it.  

In practice, one learns to approximate $h_\text{optimal}(\vec{x})$ from a set of signal and background $\vec{x}$ examples (\textit{training data}).  When the dimensionality of $\vec{x}$ is small and the number of examples large, it is often possible to approximate $p_S(\vec{x})$ and $p_B(\vec{x})$ directly by using histograms.  When the dimensionality is large, an explicit construction is often not possible.  In this case, one constructs a loss function that is minimized using a machine learning algorithm like a boosted decision tree or (deep) neural network.  The following section describes three paradigms for learning $h_\text{optimal}(\vec{x})$ with different amounts of information available at training time:  full supervision, LLP, and CWoLa.  The ideas presented here apply to any procedure for constructing $h_\text{optimal}(\vec{x})$.

\subsection{Full supervision}
\label{sec:full}

Fully supervised learning is the standard classification paradigm.  Each example $\vec{x}_i$ comes with a label $u_i\in\{S,B\}$.  For models trained to minimize loss functions, typical loss functions are the mean squared error:
\begin{equation}
\ell_{\rm MSE}=\frac{1}{N}\sum_{i=1}^N \Big(h(\vec{x}_i)-\mathbb{I}(u_i=S) \Big)^2,
\end{equation}
for the indicator function $\mathbb{I}$, or the cross-entropy:
\begin{equation}
\ell_{\rm CE} = -\frac1N\sum_{i=1}^N \Big( \mathbb{I}(u_i=S) \log h(\vec{x}_i)+ \big(1 - \mathbb{I}(u_i=S) \big) \log \big(1 - h(\vec{x}_i)\big) \Big),
\end{equation}
where $N$ is the size of the subset (\textit{batch}) of the available training data.  With large enough training samples, flexible enough model parameterization, and suitable minimization procedure, the learned $h$ should approach the performance of $h_\text{optimal}$.

\subsection{Learning from label proportions}
\label{sec:weak}

For weak supervision, one does not have complete and/or accurate label information.  Here, we consider the case of accurate labels, but in the context of mixed samples.  Consider two processes $M_1$ and $M_2$ that are mixtures of the original signal and background processes:
\begin{align}
p_{M_1}(\vec{x})&=f_1\, p_S(\vec{x})+(1-f_1) \, p_B(\vec{x})\label{eq:pA},\\
p_{M_2}(\vec{x})&=f_2 \, p_S(\vec{x})+(1-f_2)\, p_B(\vec{x})\label{eq:pB},
\end{align}
with the signal fractions satisfying $0\leq f_2 < f_1 \leq 1$.

Instead of having training data labeled as being from $p_S$ or $p_B$, we are now only given examples drawn from $p_{M_1}$ and $p_{M_2}$ with the corresponding $M_1$ and $M_2$ labels.  We are however told $f_1$ and $f_2$ ahead of time.  The resulting optimization problems are much less constrained than those in \Sec{sec:full}, but learning is still possible.  The key is to use several different mixed samples with sufficiently different fractions in order to avoid trivial failure modes, as discussed in~\Ref{Dery:2017fap}.  One possible loss function is given by:
\begin{equation}\ell_{\rm LLP} = \left| \sum_{i=1}^{N_{M_1}}\frac{h(\vec{x}_i)}{N_{M_1}} -f_1 \right|+ \left| \sum_{j=1}^{N_{M_2}}\frac{h(\vec{x}_j)}{N_{M_2}} -f_2 \right|,
\end{equation}
where $N_{M_1}$ and $N_{M_2}$ are the number of $M_1$ and $M_2$ examples in the batch.  One could extend (and improve) this paradigm by adding in more samples with different fractions, but we consider only two here for simplicity.

\subsection{Classification without labels}
\label{sec:cwola}

CWoLa is an alternative strategy for weak supervision in the context of mixed samples.  Rather than modifying the loss function to accommodate the limited information as in \Sec{sec:weak}, the CWoLa approach is to simply train the model to discriminate the mixed samples $M_1$ and $M_2$ from one another.  The classifier $h$ trained to distinguish $M_1$ from $M_2$ (using full supervision) is then directly applied to distinguish $S$ from $B$. An illustration of this technique is shown in \Fig{fig:framework}. Remarkably, this procedure results in an optimal classifier (as defined in the beginning of Sec.~\ref{sec:withandwithoutlabels}) for the $S$ versus $B$ classification problem:
\begin{theorem}
\label{th:optXYAB}
Given mixed samples $M_1$ and $M_2$ defined in terms of pure samples $S$ and $B$ using \Eqs{eq:pA}{eq:pB} with signal fractions $f_1 > f_2$, an optimal classifier trained to distinguish $M_1$ from $M_2$ is also optimal for distinguishing $S$ from $B$.
\end{theorem}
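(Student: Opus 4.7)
The plan is to invoke the Neyman--Pearson lemma, which the excerpt has already stated gives the likelihood ratio as the optimal classifier, and then show by direct algebra that the likelihood ratio $L_{M_1/M_2}(\vec x) = p_{M_1}(\vec x)/p_{M_2}(\vec x)$ is a strictly monotonic function of the likelihood ratio $L_{S/B}(\vec x) = p_S(\vec x)/p_B(\vec x)$. Since ROC curves are invariant under strictly monotonic reparameterizations of the classifier output (the thresholds simply relabel), monotonic equivalence of the two likelihood ratios immediately implies that any classifier that is optimal for $M_1$ versus $M_2$ traces out the same ROC curve as $L_{S/B}$, hence is optimal for $S$ versus $B$.

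For the algebra, I would substitute the mixture definitions \Eqs{eq:pA}{eq:pB} into $L_{M_1/M_2}$ and divide numerator and denominator by $p_B(\vec x)$, obtaining
\begin{equation}
L_{M_1/M_2}(\vec x) \;=\; \frac{f_1\, L_{S/B}(\vec x) + (1-f_1)}{f_2\, L_{S/B}(\vec x) + (1-f_2)}.
\end{equation}
Viewing the right-hand side as a M\"obius function $g(L) = (f_1 L + 1 - f_1)/(f_2 L + 1 - f_2)$ of the scalar $L \geq 0$, a one-line derivative gives $g'(L) = (f_1 - f_2)/(f_2 L + 1 - f_2)^2 > 0$ using the hypothesis $f_1 > f_2$. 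Hence $g$ is strictly increasing on its domain, so $L_{M_1/M_2}$ and $L_{S/B}$ are related by a strictly monotonic transformation.

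To turn this into a statement about an arbitrary optimal classifier $h$, I would use that ``optimal'' as defined earlier in the excerpt means the ROC curve of $h$ coincides with the ROC curve of the Neyman--Pearson likelihood-ratio classifier; equivalently, the level sets $\{h > c\}$ agree (up to measure-zero sets) with level sets $\{L_{M_1/M_2} > c'\}$ for an appropriate correspondence of thresholds. Composing with the monotonic map $g^{-1}$ converts these into level sets of $L_{S/B}$, proving that $h$ achieves the Neyman--Pearson optimal ROC curve for $S$ versus $B$ as well.

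The main obstacle, and the only subtlety worth flagging, is the handling of degenerate regions: points $\vec x$ where $p_B(\vec x) = 0$ (so $L_{S/B}$ is formally infinite) and the requirement that $f_1 \neq f_2$ for the transformation to be injective. The $f_1 = f_2$ case corresponds to $M_1 = M_2$, where no classifier can distinguish the mixtures and the theorem trivially fails, which is why the strict inequality $f_1 > f_2$ is essential. For the $p_B = 0$ regions, one can work with the reciprocal likelihood ratio or equivalently treat $L$ as a point in the extended reals; the M\"obius form extends continuously and remains monotonic. Beyond these edge cases the proof is essentially a one-line computation, so the core work is just setting up the Neyman--Pearson framing and cleanly articulating the monotonicity-implies-equivalence step.
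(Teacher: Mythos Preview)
Your proposal is correct and follows essentially the same route as the paper: invoke Neyman--Pearson, substitute the mixture definitions to write $L_{M_1/M_2}$ as a M\"obius function of $L_{S/B}$, and verify strict monotonicity via the derivative $(f_1-f_2)/(f_2 L + 1 - f_2)^2 > 0$. If anything, you are slightly more careful than the paper in spelling out the $p_B = 0$ edge case and the level-set argument connecting monotonic equivalence of likelihood ratios to equality of ROC curves.
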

\begin{proof}
The optimal classifier to distinguish examples drawn from $p_{M_1}$ and $p_{M_2}$ is the likelihood ratio $L_{M_1/M_2}(\vec{x}) = p_{M_1}(\vec{x})/p_{M_2}(\vec{x})$.  Similarly, the optimal classifier to distinguish examples drawn from $p_S$ and $p_B$ is the likelihood ratio $L_{S/B}(\vec{x})=p_S(\vec{x})/p_B(\vec{x})$.  Where $p_B$ has support, we can relate these two likelihood ratios algebraically:
\begin{equation}
L_{M_1/M_2} = \frac{p_{M_1}}{p_{M_2}}=\frac{f_1 \, p_S + (1-f_1) \, p_B}{f_2 \,  p_S + (1-f_2) \, p_B} = \frac{f_1 \, L_{S/B} + (1-f_1) }{f_2 \, L_{S/B} + (1-f_2)},
\end{equation}
which is a monotonically increasing rescaling of the likelihood $L_{S/B}$ as long as $f_1 >  f_2$, since $\partial_{L_{S/B}}L_{M_1/M_2}=(f_1-f_2)/(f_2L_{S/B}-f_2+1)^2>0$. If $f_1 < f_2$, then one obtains the reversed classifier. Therefore, $L_{S/B}$ and $L_{M_1/M_2}$ define the same classifier.
\end{proof}

\begin{figure}[t]
\centering
\includegraphics[scale = 0.7]{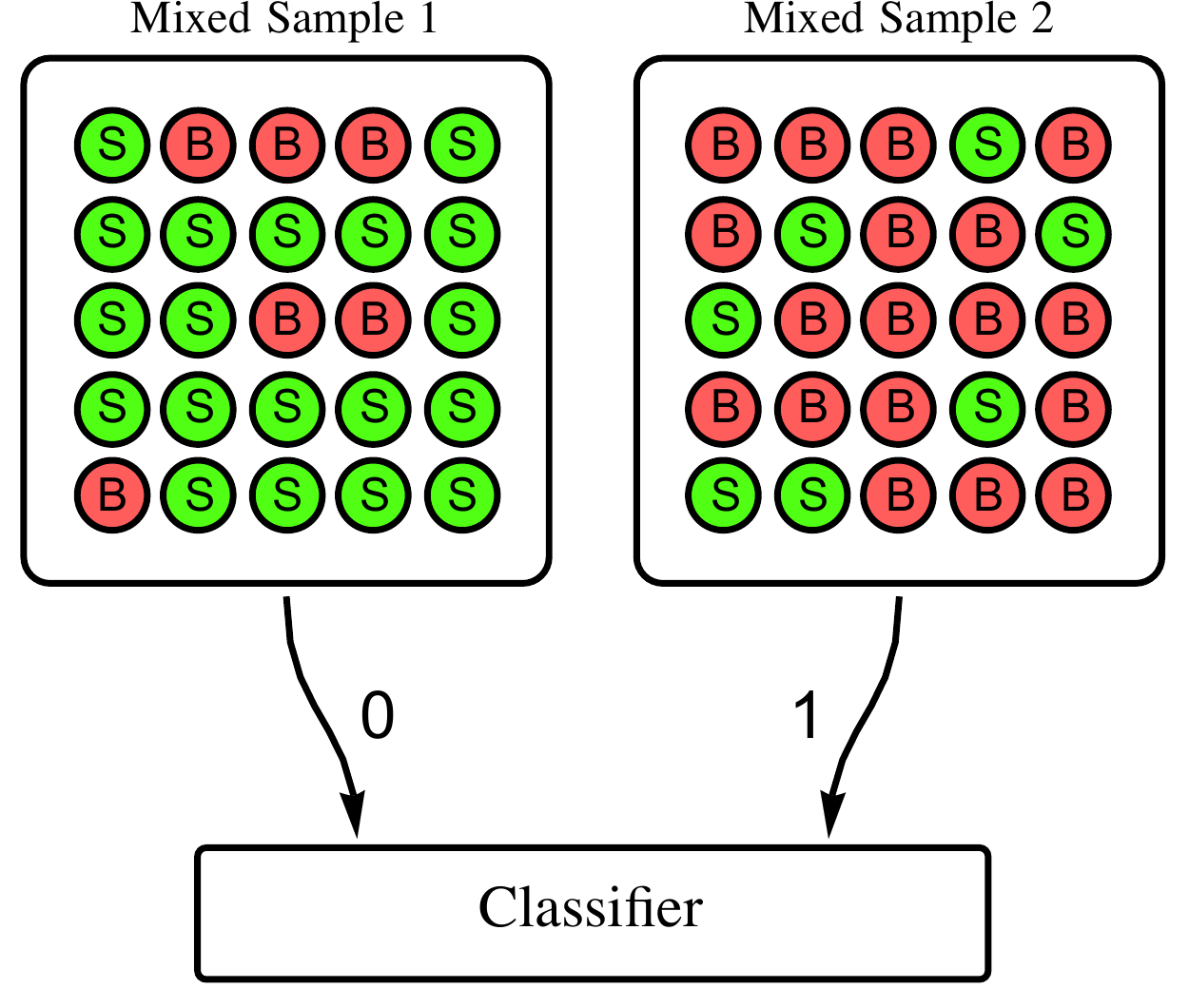}
\caption{\label{fig:framework}
An illustration of the CWoLa framework. Rather than being trained to directly classify signal ($S$) from background ($B$), the classifier is trained by standard techniques to distinguish data as coming either from the first or second mixed sample, labeled as 0 and 1 respectively. No information about the signal/background labels or class proportions in the mixed samples is used during training.}
\end{figure}

An important feature of CWoLa is that, unlike the LLP-style weak supervision in \Sec{sec:weak}, the label proportions $f_1$ and $f_2$ are not required for training.  Of course, this proof only guarantees that the optimal classifier from CWoLa is the same as the optimal classifier from fully-supervised learning.  We explore the practical performance of CWoLa in \Secs{sec:toy}{sec:qg}.

The problem of learning from unknown mixed samples can be shown to be mathematically equivalent to the problem of learning with asymmetric random label noise, where there have been recent advances~\cite{scott2013,natarajan2013learning}. The equivalence of these frameworks follows from the fact that randomly flipping the labels of pure samples, possibly with different flip probabilities for signal and background, produces mixed samples. In the language of noisy labels, \Ref{scott2013} argues that even unknown class proportions can be estimated from mixed samples under certain conditions using mixture proportion estimation~\cite{scott2015rate}, which may have interesting applications in collider physics.  There are also connections between learning from unknown mixed samples and the \emph{calibrated classifiers} approach in \Ref{Cranmer:2015bka}, where measurement of the class proportions from unknown mixtures is also shown to be possible.

\subsection{Operating points}
\label{sec:OP}

While the optimal classifier from CWoLa is independent of the mixed sample compositions, some minimal input is needed in order to establish classification operating points.  Specifically, to define a cut on the classifier $h$ at a value $c$ to achieve signal efficiency $\epsilon_S$, one requires some degree of label information.

One practical strategy is to use CWoLa to train on two large mixed samples without label or class proportion information, and then benchmark it on two smaller samples where the class proportions $f_1$ and $f_2$ are precisely known.  In that case, one can solve a simple system of equations on the smaller samples:
\begin{align}
\Pr(h(x)>c\, |\, M_1) &= \epsilon_S \, f_1  + \epsilon_B\, (1-f_1) \label{eq:roc1}\\
\Pr(h(x)>c\, |\, M_2) &= \epsilon_S \, f_2  + \epsilon_B\, (1-f_2) \label{eq:roc2},
\end{align}
where the probabilities can be estimated numerically by counting the number of events that pass the classifier cut in some sample, e.g.\ $\Pr(h(x)>c\,|\, M_1)\approx \sum_{x\in {\mathcal{M}_1}} \mathbb{I}[h(x) > c]/|\mathcal{M}_1|$, where $\mathcal M_1$ is the mixed sample data.   Thus with class proportions only, the ROC curve of a classifier can be determined.\footnote{We are grateful to Francesco Rubbo for bringing this to our attention.}

For the purpose of establishing working points, one might need to rely on simulations to determine the label proportions of the test samples.  In many cases, though, label proportions are better known than the details of the observables used to train the classifier.  For instance, in jet tagging, the label proportions of kinematically-selected samples are largely determined by the hard scattering process, with only mild sensitivity to effects such as shower mismodeling.  In this way, one is sensitive only to simulation uncertainties associated with sample composition, which in most cases are largely uncorrelated with uncertainties associated with tagging performance.

To summarize, the CWoLa paradigm does not need class proportions during training, and it only requires a small sample of test data where class proportions are known in order to determine the classifier performance and operating points, with minimal input from simulation.

\section{Illustrative example: Two gaussian random variables}
\label{sec:toy}

Before demonstrating the combination of CWoLa with a modern neural network, we first illustrate the various forms of learning discussed in \Sec{sec:withandwithoutlabels} through a simplified example where the optimal classifier can be obtained analytically.  Consider a single observable $x$ for distinguishing a signal $S$ from a background $B$.  For simplicity, suppose that the probability distribution of $x$ is a Gaussian with mean $\mu_S$ and standard deviation $\sigma_S$ for the signal and a Gaussian with mean $\mu_B$ and standard deviation $\sigma_B$ for the background.  We then consider the mixed samples $M_1$ and $M_2$ from \Eqs{eq:pA}{eq:pB} with signal fractions $f_1$ and $f_2$.

In this one-dimensional case, the optimal fully-supervised classifier can be constructed analytically:
\begin{align}
\label{eq:optimalclassifier}
h_\text{optimal}(x) = \frac{p_S(x)}{p_B(x)}.
\end{align}
Of course, non-parameterically estimating \Eq{eq:optimalclassifier} numerically requires a choice of binning which can introduce numerical fluctuations.  To avoid this effect, we discretize $x$ into $50$ bins between $-40$ and $40$ (under/overflow is added to the first/last bins).  There are then a finite number of possibilities for the likelihood ratio in \Eq{eq:optimalclassifier}.

\begin{figure}[p] 
\centering
\subfloat[]{
\includegraphics[width=0.48\textwidth]{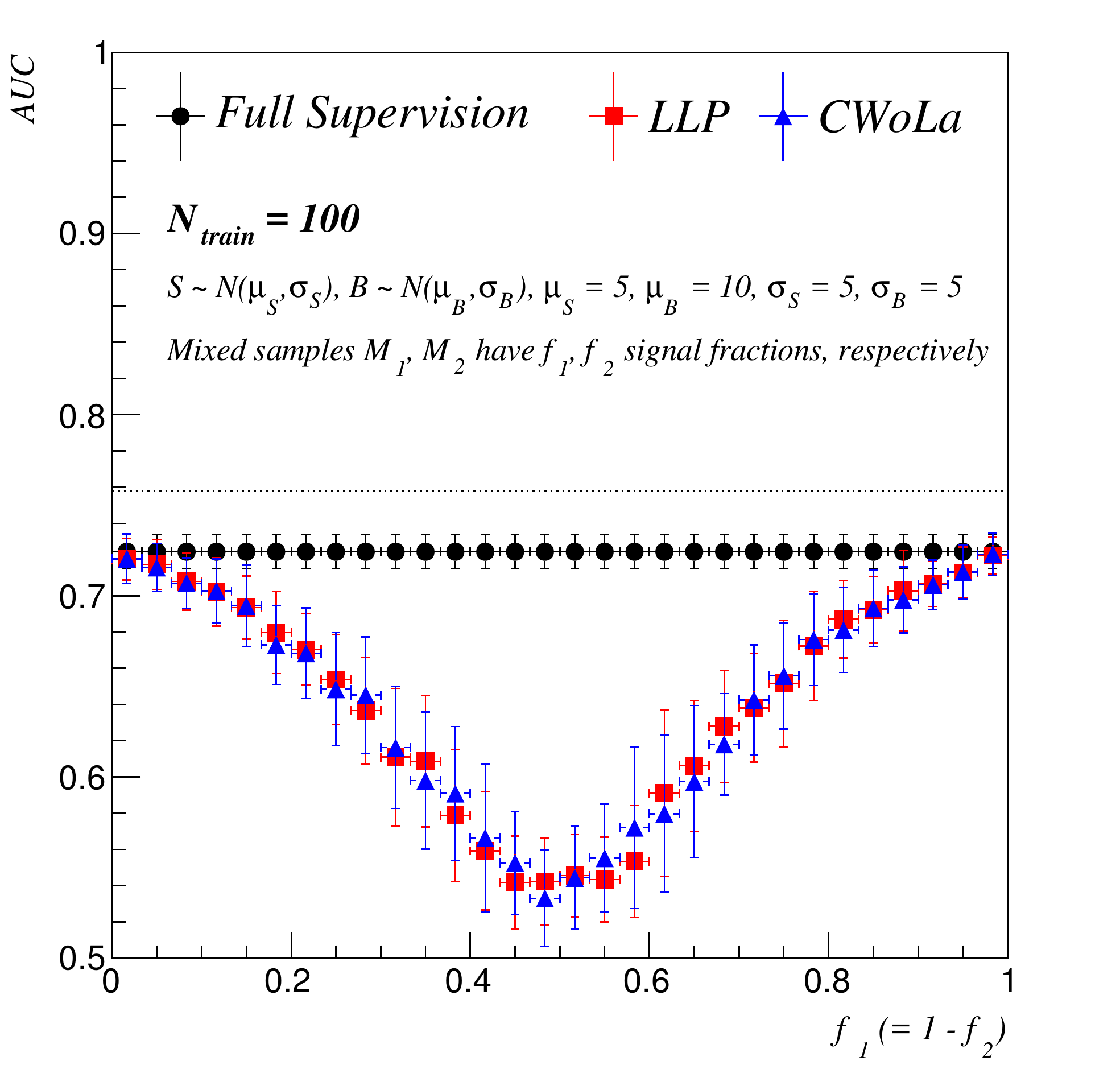}
}
\subfloat[]{
\includegraphics[width=0.48\textwidth]{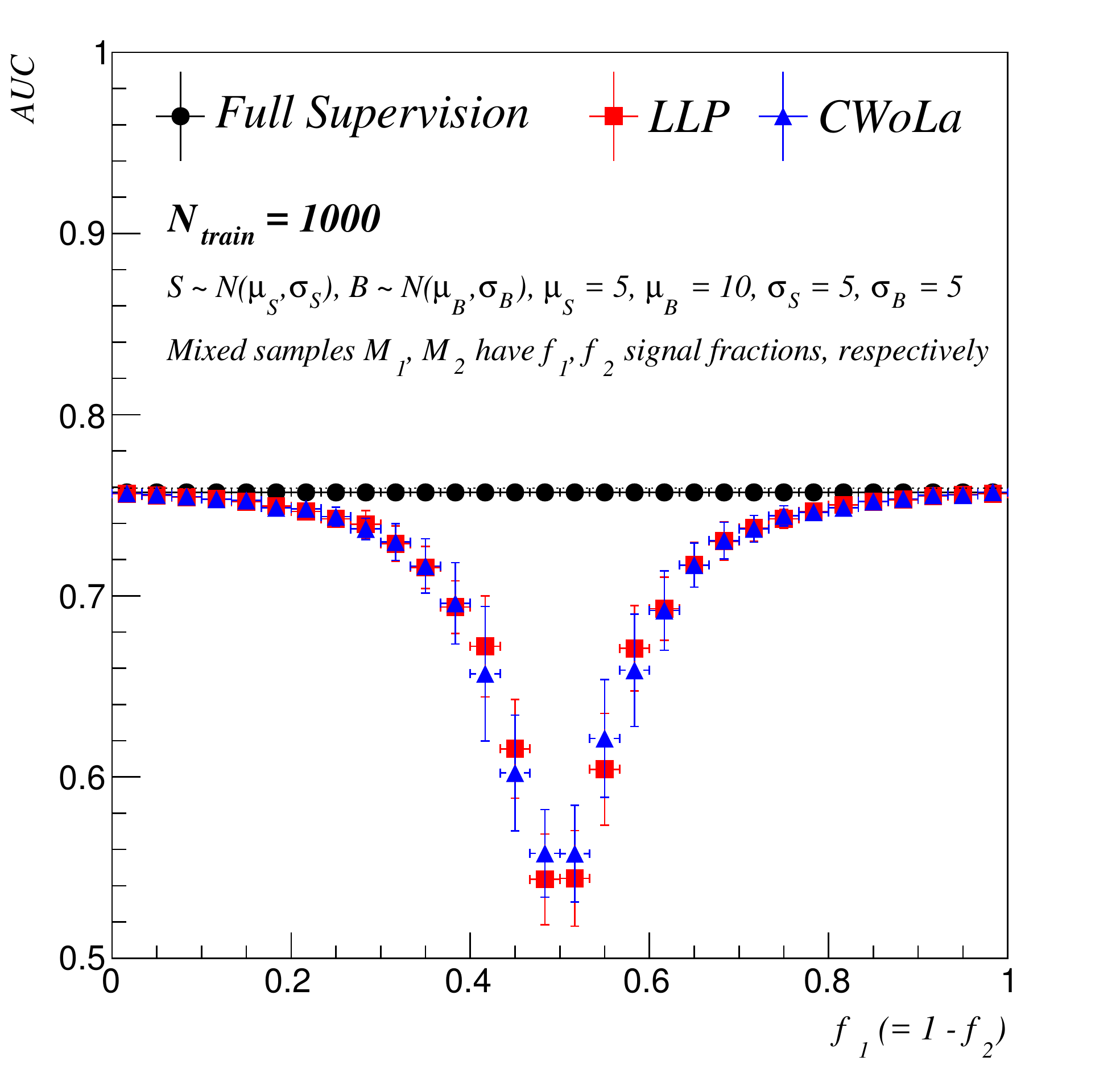}
}
\\
\subfloat[]{
\includegraphics[width=0.48\textwidth]{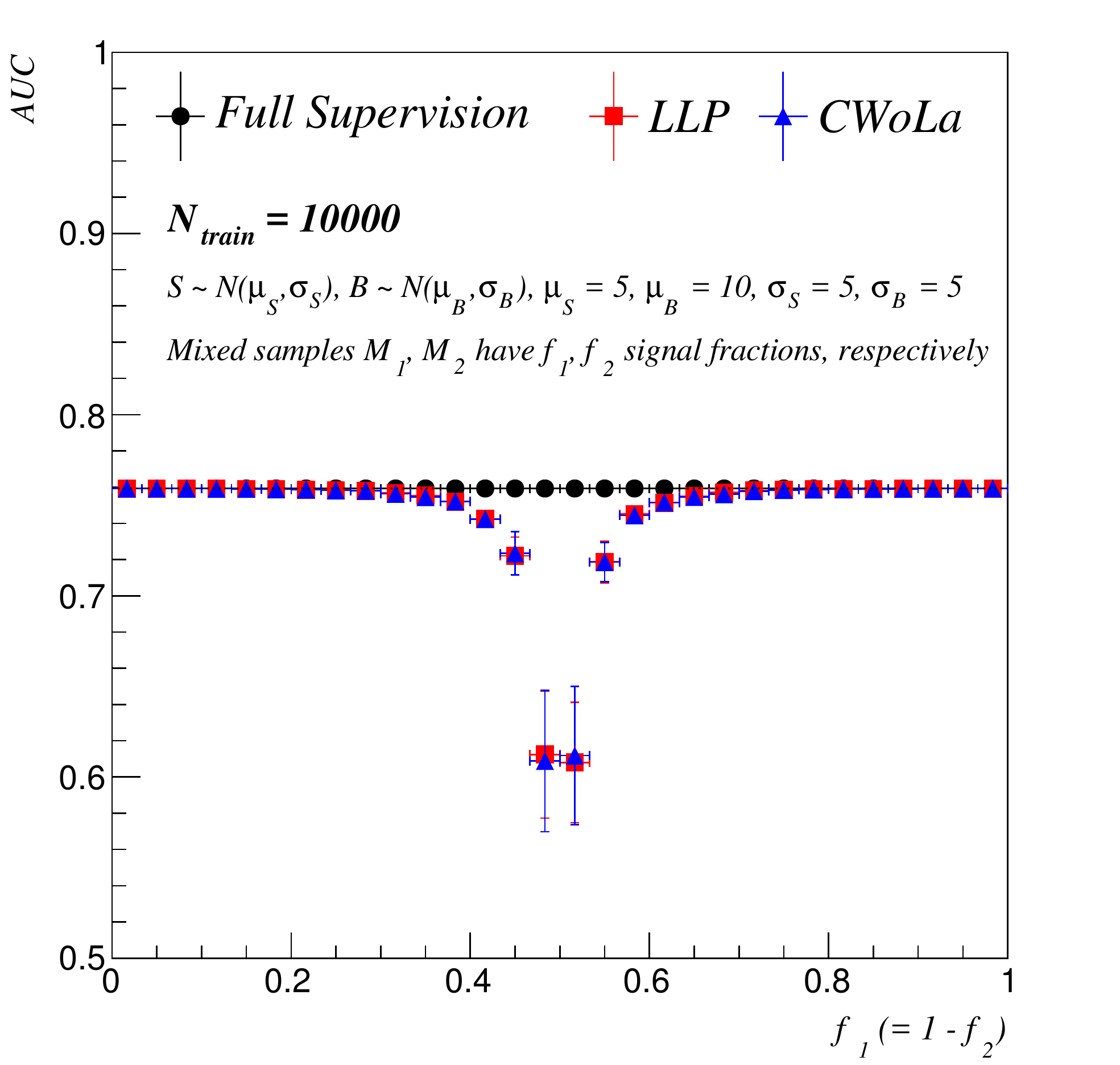}
}
\caption{The AUC for the LLP and CWoLa methods as a function of the signal fraction $f_1$, for training sizes $N_{\rm train}$ of (a) 100 events, (b) 1k events, and (c) 10k events.  Here, the complementary signal fraction is $f_2 = 1-f_1$.  By construction, the AUC for full supervision is independent of $f_1$.  The horizontal dashed line indicates the fully-supervised AUC with infinite training statistics. For $N_{\rm train}$ sufficiently large and $f_1$ sufficient far from $0.5$, all three methods converge to the optimal case.}
\label{fig:toy}
\end{figure}

Using a calligraphic font to denote explicit training samples, we test the following classifiers on signal ($\mathcal{S}$), background ($\mathcal{B}$), and mixed ($\mathcal{M}_{1,2}$) training samples of the same size:
\begin{enumerate}
\item {\bf Full Supervision} (\Sec{sec:full}): By construction, every example in the signal training dataset $\mathcal{S}$ is a signal event and every example in the background training set $\mathcal{B}$ is a background event.  The classifier is the numerical approximation to \Eq{eq:optimalclassifier}: 
\begin{align}
\label{eq:fullapprox}
h_\text{full}(x)=\frac{\sum_{y\in \mathcal{S}} \mathbb{I}[y=x]}{ \sum_{y\in \mathcal{B}} \mathbb{I}[y=x]}.
\end{align}

\item {\bf LLP}  (\Sec{sec:weak}): The events in the mixed training samples $\mathcal{M}_1$ and $\mathcal{M}_2$ are a mixture of signal and background events.  Weak supervision proceeds by solving the system of equations in \Eqs{eq:pA}{eq:pB} and using numerical estimates for $p_{M_1}$ and $p_{M_2}$:
\begin{align}
\label{eq:weak}
h_\text{LLP}(x)=\frac{(1-f_2)\sum_{y\in \mathcal{M}_1} \mathbb{I}[y=x]-(1-f_1)\sum_{y\in \mathcal{M}_2} \mathbb{I}[y=x]}{f_1\sum_{y\in \mathcal{M}_2} \mathbb{I}[y=x]-f_2\sum_{y\in \mathcal{M}_1} \mathbb{I}[y=x]}.
\end{align}

\item {\bf CWoLa} (\Sec{sec:cwola}): The input is the same as for the LLP case, though the fractions $f_1$ and $f_2$ are not needed as input. The CWoLa classifier is the same as in \Eq{eq:fullapprox}, only now signal and background distributions are replaced by the available mixed examples:
\begin{align}
\label{eq:fullapprox_cwol}
h_\text{CWoLa}(x)=\frac{\sum_{y\in \mathcal{M}_1} \mathbb{I}[y=x]}{ \sum_{y\in \mathcal{M}_2} \mathbb{I}[y=x]}.
\end{align}
\end{enumerate}
The performance of the classifiers trained in this way is evaluated on a holdout set of signal and background examples that is large enough such that statistical fluctuations are negligible.  We use the area under the curve (AUC) metric to quantify performance.  For continuous random variables, the AUC can be defined as $\Pr(h(x|S) > h(x|B))$.  This notion extends well to discrete random variables (indexed by integers):
\begin{align}
\text{AUC}=\sum_{i=1}\sum_{j=i+1}\Pr(x=i \,|\, S)\, \Pr(x=j\,| \, B)+\frac12\sum_{i=1}\Pr(x=i\, |\, S)\, \Pr(x=i\,| \,B).
\end{align}
For a properly constructed classifier, the AUC $\geq 0.5$.  In all of the numerical examples shown below, the classifier is inverted if necessary so that by construction, AUC $\geq 0.5$.

In \Fig{fig:toy}, we illustrate the performance of the three classification paradigms described above with 100, 1k, and 10k training examples each of $S$ and $B$, or $M_1$ and $M_2$ in the LLP and CWoLa cases, taking $f_1 = 1- f_2$ for concreteness.  Testing is performed on 100k $S$ and $B$ examples in all cases.  The LLP and CWoLa paradigms have nearly the same dependence on the number of training events and the signal fraction $f_1$.  The full supervision does not depend on the signal composition of $M_1$ and $M_2$ as it is trained directly on labeled signal and background examples. As expected, the performance is poor when the number of training examples is small or $f_1$ is close to $f_2$ (so the effective number of useful events is small).  As $f_1\rightarrow f_2$, the two mixtures become identical and there is thus no way to distinguish $M_1$ and $M_2$; in the context of LLP, this corresponds to attempting to solve a degenerate system of equations.  With sufficiently many training examples and/or well-separated fractions $f_1$ and $f_2$, the techniques trained with $M_1$ and $M_2$ converge to the fully supervised case, as expected from Theorem~\ref{th:optXYAB}.

\begin{figure}[t] 
\centering
\includegraphics[width=0.48\textwidth]{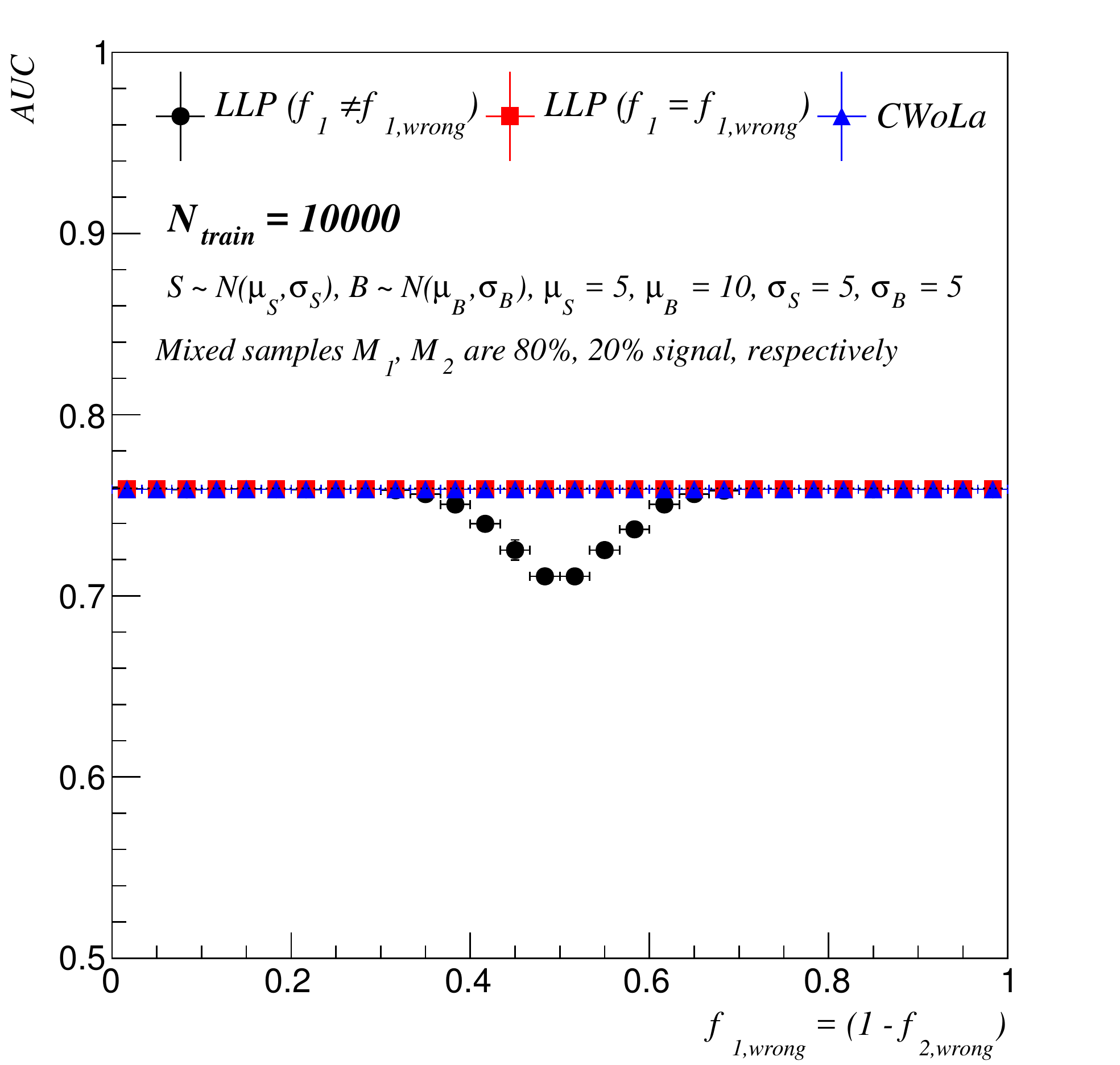}
\caption{The AUC for LLP and CWoLa as a function of the (possibly incorrect) signal fraction provided for training.  By construction, CWoLa does not depend on the input fraction and LLP is only sensitive to provided signal fraction information when that fraction is near $50\%$.}
\label{fig:to2y}
\end{figure}

One advantage of CWoLa over the LLP approach is that the fractions $f_1$ and $f_2$ are not required for training. In \Fig{fig:to2y}, we demonstrate the impact on the AUC for LLP when the wrong fractions are provided at training time.  Here, the true fractions are $f_1=80\%$ and $f_2=20\%$, but different fractions $f_\text{$1$,wrong} = 1- f_\text{$2$,wrong}$ are used to calculate \Eq{eq:weak}.  For $f_\text{$1$,wrong}$ far from $50\%$, there is little dependence on the fraction used for training.  This insensitivity is likely due to the preservation of monotonicity to the full likelihood with small perturbations in $f$, as discussed in detail in \Ref{Cohen:2017exh}.

With this one-dimensional example, the estimate for the optimal classifier under each of the three schemes is computable directly.  It is often the case that $\vec{x}$ is highly multi-dimensional, though, in which case a more sophisticated learning scheme may be required.  We investigate the performance of CWoLa in a five-dimensional space in the next section.

\section{Realistic example: Quark/gluon jet discrimination}
\label{sec:qg}

Quark- versus gluon-initaited jet tagging \cite{Nilles:1980ys,Jones:1988ay,Fodor:1989ir,Jones:1990rz,Lonnblad:1990qp,Pumplin:1991kc,Gallicchio:2011xq,Gallicchio:2012ez,Larkoski:2014pca} is a particularly important classification problem in high energy physics where training on data would be beneficial.  This is because correlations between key observables known to be useful for tagging are not always well-modeled by simulations as they depend on the detailed structure of a jet's radiation pattern~\cite{Aad:2014gea,Aad:2016oit}.  Furthermore, even the LLP paradigm proposed in \Ref{Dery:2017fap} can be sensitive to the input fractions which are themselves dependent on non-perturbative information from parton distribution functions.  In this section, we test the performance of CWoLa in a realistic context where a small number of quark/gluon discriminants are combined into one classifier, similar to the CMS quark/gluon likelihood \cite{CMS:2013kfa,CMS-DP-2016-070}.

A key limitation of this study is that we artificially construct mixed samples $\mathcal{M}_1$ and $\mathcal{M}_2$ from pure ``quark'' ($\mathcal{S}$) and pure ``gluon'' ($\mathcal{B}$) samples.\footnote{The reason for the scare quotes is discussed at length in \Ref{Gras:2017jty}, as the definition of a quark or gluon jet is fundamentally ambiguous.}  In the practical case of interest at the LHC, one would measure a quark-enriched sample in $Z$ plus jet events and a gluon-enriched sample in dijet events, with more sophisticated selections possible as well \cite{Gallicchio:2011xc}.  However, the ``quark'' jet in $pp \to Z+j$ event is not the same as the ``quark'' jet in $pp \to 2j$, since there are soft color correlations with the rest of the event.  Jet grooming techniques \cite{Butterworth:2008iy,Ellis:2009su,Ellis:2009me,Krohn:2009th,Dasgupta:2013ihk,Larkoski:2014wba} can mitigate the impact of soft effects to provide a more universal ``quark'' jet definition \cite{Frye:2016okc,Frye:2016aiz}.  Still, one needs to validate the robustness of quark/gluon classifiers to the possibility of sample-dependent labels, and we leave a detailed study of this effect to future work.

This study is based on five key jet substructure observables which are known to be useful quark/gluon discriminants~\cite{Gras:2017jty}.  The discriminants are combined using a modern neural network employing either CWoLa or fully-supervised learning. We do not show a benchmark curve for LLP since it is difficult to ensure a fair comparison. By contrast, CWoLa and full supervision use the same loss function with the same training strategy, so a direct comparison is meaningful. All of the observables can be written in terms of the generalized angularities~\cite{Larkoski:2014pca} (see also \cite{Berger:2003iw,Almeida:2008yp,Ellis:2010rwa}):
\begin{equation}\label{eq:genang}
\lambda_\beta^\kappa = \sum_{i\in \text{jet}} z_i^\kappa \theta_i^\beta, \quad \text{with} \quad z_i=\frac{p_\text{T,$i$}}{\sum_{j\in \text{jet}} p_\text{T,$j$}}, \quad \theta_i = \frac{\Delta R_i}{R},
\end{equation}
where $\Delta R_i$ is the rapidity/azimuth distance to the $E$-scheme jet axis,\footnote{This is in contrast to \Ref{Gras:2017jty}, which uses the winner-take-all axis \cite{Bertolini:2013iqa,Larkoski:2014uqa,Salam:WTAUnpublished}.} $p_\text{T,$i$}$ is the particle transverse momentum, and $R$ is the jet radius.  The observables used to train the network use $(\kappa, \beta)$ values of:
\begin{equation}
\arraycolsep=5pt
\begin{array}{ccccc}
(0,0) & (2,0) & (1,0.5) & (1,1) & (1,2) \\
\text{multiplicity} &  p_\text{T}^\text{D} &  \text{LHA} & \text{width} & \text{mass}
\end{array}
\end{equation}
where the names map onto the well-known discriminants in the quark/gluon literature.\footnote{Strictly speaking $(2,0)$ is the square of $p_\text{T}^\text{D}$~\cite{Chatrchyan:2012sn}, and $(1,2)$ is mass-squared over energy-squared in the soft-collinear limit.  For this study, we use the angularity definition of the five observables.  Note that the first observable is infrared and collinear (IRC) unsafe, the second observable is IR safe but C unsafe, and the last three observables with $\kappa = 1$ are all IRC safe.  LHA refers to the Les Houches Angularity from the eponymous study in \Refs{Gras:2017jty,Badger:2016bpw}.}

Quark and gluon jets are simulated from the decay of a heavy scalar particle $H$ with $m_H=500$ GeV in either the $pp\rightarrow H\rightarrow q\bar{q}$ or $pp\rightarrow H\rightarrow gg$ channel.  Production, decay, and fragmentation are modeled with \textsc{Pythia 8.183}~\cite{Sjostrand:2007gs}.  Jets are clustered using the anti-$k_t$ algorithm~\cite{Cacciari:2008gp} with radius $R=0.6$ implemented in \textsc{Fastjet 3.1.3}~\cite{Cacciari:2011ma}.  Only detector-stable hadrons are used for jet finding.  Since the gluon color factor $C_A$ is larger than the quark color factor $C_F$ by about a factor of two, gluon jets have more particles and are ``wider'' on average as measured by the angularities listed above.

To classify quarks and gluons with either the CWoLa or fully-supervised method, we use a simple neural network consisting of two dense layers of 30 nodes with rectified linear unit (ReLU) activation functions connected to a 2-node output with a softmax activation function.  All neural network training was performed with the \textsc{Python} deep learning library \textsc{Keras}~\cite{keras} with a \textsc{Tensorflow}~\cite{tensorflow} backend. The data consisted of 200k quark/gluon events, partitioned into 20k validation event, 20k test events, and the remainder used as training event samples of various sizes. He-uniform weight initialization~\cite{heuniform} was used for the model weights. The network was trained with the categorical cross-entropy loss function using the \textsc{Adam} algorithm~\cite{adam} with a learning rate of 0.001 and a batch size of 128.

\begin{figure}[p] 
\centering
\subfloat[]{
\includegraphics[width=0.48\textwidth]{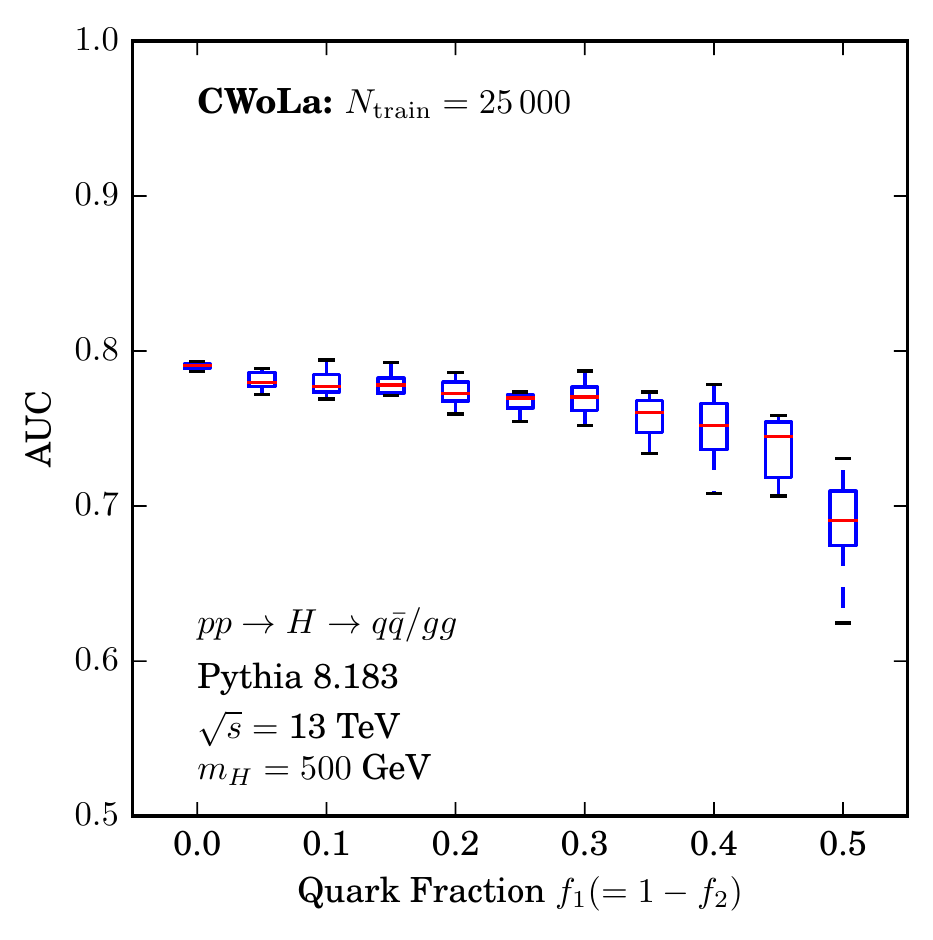}
}
\subfloat[]{
\includegraphics[width=0.48\textwidth]{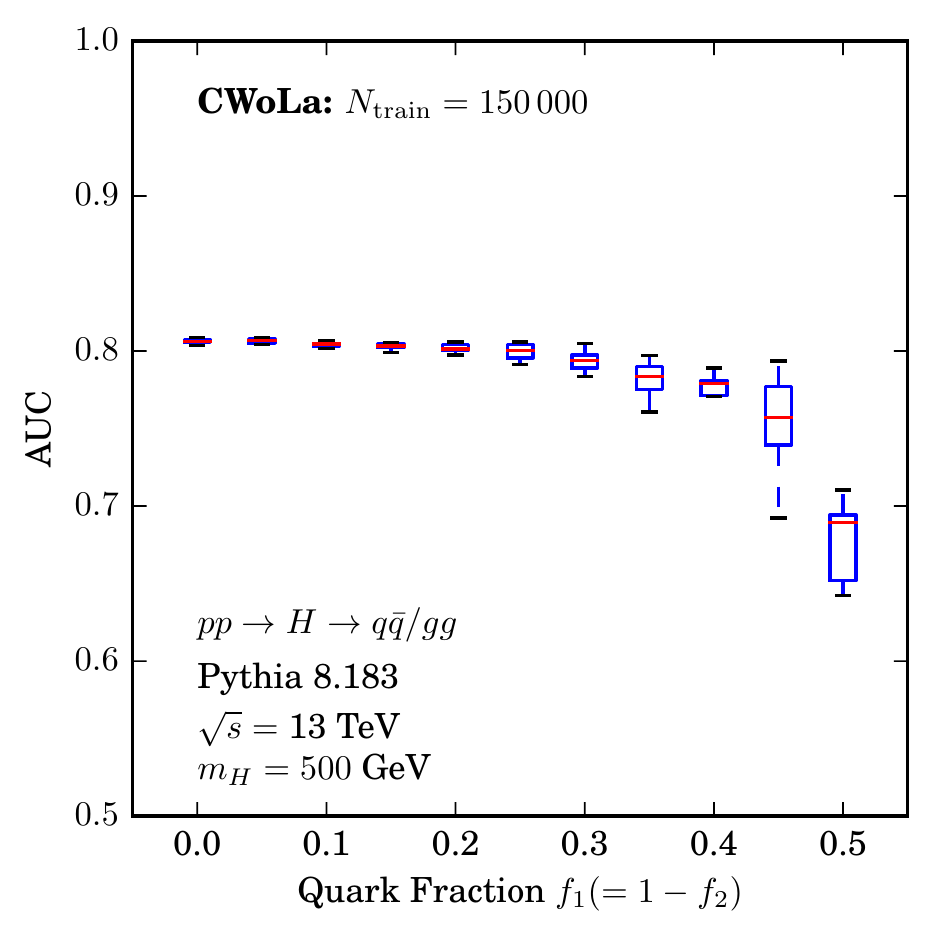}
}
\caption{Training performance of the CWoLa method on two mixed samples with $f_1 = 1-f_2$ quark fraction.  Shown are the range of AUC values obtained from 10 repetitions of training the neural network on (a) 25k events and (b) 150k events for 10 epochs. }
\label{fig:qg150fsweep}
\end{figure}

In \Fig{fig:qg150fsweep}, we show the performance of CWoLa training for quark/gluon classification using mixed samples of different purities. These mixed samples of 25k and 150k training events  were generated by shuffling the pure samples into two sets in different proportions. Performance is measured in terms of the classifier AUC. The behavior resembles that found in the toy model of \Fig{fig:toy}, with more training data resulting in increased robustness to sample impurity.  It is remarkable that such good performance can be obtained even when the signal/background events are so heavily mixed.

\begin{figure}[p] 
\centering
\subfloat[]{
\includegraphics[width=0.48\textwidth]{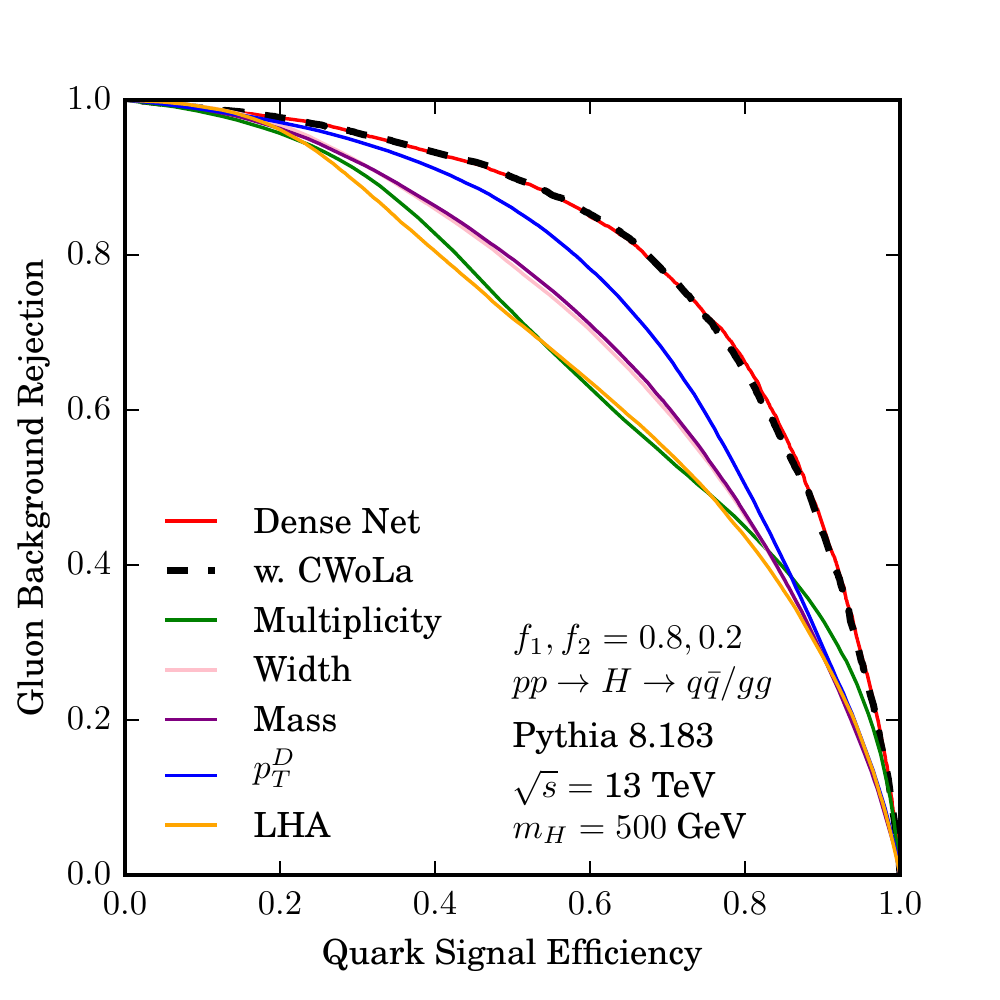}
}
\subfloat[]{
\includegraphics[width=0.48\textwidth]{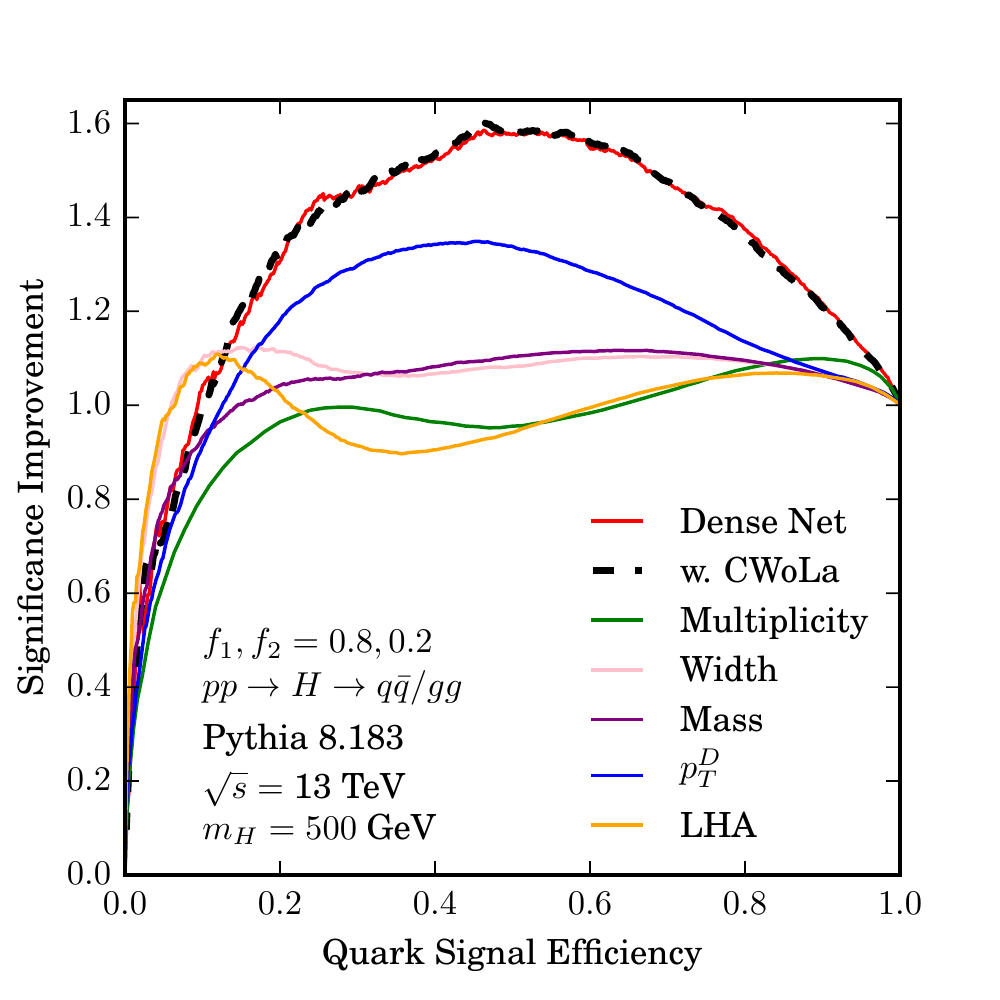}
}
\caption{Quark/gluon discrimination performance in terms of (a) ROC curves and (b) SI curves.  Shown are results for the dense net trained on 150k pure samples, and then with CWoLa on $f_1 = 80\%$ versus $f_2 = 20\%$ mixed samples, as well as the input observables individually.  The classifier trained on the mixed samples achieves similar performance to the classifier trained on the pure samples, with improvement in performance over the input observables.}
\label{fig:qgROCs}
\end{figure}

In \Fig{fig:qgROCs}, we show ROC and significance improvement (SI) curves for 150k training events, where SI is a curve of $\epsilon_q/\sqrt{\epsilon_g}$ at different $\epsilon_q$ values~\cite{Gallicchio:2012ez}.  Results are given for the fully-supervised classifier trained on pure samples and the CWoLa classifier trained on mixed samples with $f_1 = 80\%$ and $f_2 = 20\%$, along with the curves of the input observables.  Both the fully-supervised and CWoLa dense networks achieve similar performance, with the expected improvement over the individual input observables.  This suggests that the proof of CWoLa optimality in Theorem~\ref{th:optXYAB} is achievable in practice, though many more studies are needed to demonstrate this in a wider range of contexts.

\section{Conclusions}
\label{sec:conc}

We introduced the CWoLa framework for training classifiers on different mixed samples of signal and background events, without using true labels or class proportions.  The observation that the optimal classifier for mixed samples of signal and background is also optimal for pure samples of signal and background, proven in Theorem~\ref{th:optXYAB}, could be of tremendous practical use at the LHC for learning directly from data whenever truth information is unknown or uncertain and whenever detailed and reliable simulations are unavailable. We highlight that no new specific code, loss function, or model architecture is needed to implement CWoLa.  Any tools for training a classifier using truth information can be directly applied to discriminate mixed samples and thus to train in the CWoLa framework directly on data.

Using a toy example, we found that CWoLa performs as well as LLP (which requires knowledge of the class proportions), suggesting that CWoLa is a robust paradigm for weak supervision.  Of course, to determine operating points and classification power for the CWoLa method, some label information is needed, but it can be furnished by a smaller sample of testing data that can be separate from the larger mixed samples used for training.  It is also worth remembering that CWoLa assumes that the mixed samples are not subject to contamination or sample-dependent labeling, though one could imagine using data-driven cross-validation with more than two mixed samples to identify and mitigate such effects.  More ambitiously, one could try to apply CWoLa to event samples that otherwise look identical, to try to tease out potential subpopulations of events. 

As a realistic example, we applied the CWoLa framework to the important case of quark/gluon discrimination, a classification task for which simulations are typically unreliable and true labels are unknown.  We showed that the CWoLa method can be successfully used to train a dense neural network for quark/gluon classification on mixed samples with five jet substructure observables as input.  Though the realistic example made use of a neural network, the CWoLa paradigm can be used to train many other types of classifiers. While in this study we considered a relatively small network on a small (but important) number of inputs, the same principles apply for any type of model or input.   In future work, we plan to study CWoLa in the context of deeper architectures and larger inputs.

\acknowledgments

The authors are grateful to Timothy Cohen, Kyle Cranmer, Marat Freytsis, Patrick Komiske, Bryan Ostdiek, Francesco Rubbo, Matthew Schwartz, and Clayton Scott for helpful discussions and suggestions. Cloud computing resources were provided through a Microsoft Azure for Research award. The work of E.M.M. and J.T. is supported by the DOE under grant contract numbers DE-SC-00012567 and DE-SC-00015476.  The work of B.N. is supported by the DOE under contract DE-AC02-05CH11231.

\bibliographystyle{jhep}
\bibliography{myrefs}

\providecommand{\href}[2]{#2}\begingroup\raggedright\begin{thebibliography}{10}

\bibitem{Cogan:2014oua}
J.~Cogan, M.~Kagan, E.~Strauss, and A.~Schwarztman, {\it {Jet-Images: Computer
  Vision Inspired Techniques for Jet Tagging}},  {\em JHEP} {\bf 02} (2015)
  118, [\href{http://arxiv.org/abs/1407.5675}{{\tt arXiv:1407.5675}}].

\bibitem{Almeida:2015jua}
L.~G. Almeida, M.~Backovi{\'c}, M.~Cliche, S.~J. Lee, and M.~Perelstein, {\it
  {Playing Tag with ANN: Boosted Top Identification with Pattern Recognition}},
   {\em JHEP} {\bf 07} (2015) 086, [\href{http://arxiv.org/abs/1501.05968}{{\tt
  arXiv:1501.05968}}].

\bibitem{deOliveira:2015xxd}
L.~de~Oliveira, M.~Kagan, L.~Mackey, B.~Nachman, and A.~Schwartzman, {\it
  {Jet-images - deep learning edition}},  {\em JHEP} {\bf 07} (2016) 069,
  [\href{http://arxiv.org/abs/1511.05190}{{\tt arXiv:1511.05190}}].

\bibitem{Baldi:2016fql}
P.~Baldi, K.~Bauer, C.~Eng, P.~Sadowski, and D.~Whiteson, {\it {Jet
  Substructure Classification in High-Energy Physics with Deep Neural
  Networks}},  {\em Phys. Rev.} {\bf D93} (2016), no.~9 094034,
  [\href{http://arxiv.org/abs/1603.09349}{{\tt arXiv:1603.09349}}].

\bibitem{Barnard:2016qma}
J.~Barnard, E.~N. Dawe, M.~J. Dolan, and N.~Rajcic, {\it {Parton Shower
  Uncertainties in Jet Substructure Analyses with Deep Neural Networks}},  {\em
  Phys. Rev.} {\bf D95} (2017), no.~1 014018,
  [\href{http://arxiv.org/abs/1609.00607}{{\tt arXiv:1609.00607}}].

\bibitem{Kasieczka:2017nvn}
G.~Kasieczka, T.~Plehn, M.~Russell, and T.~Schell, {\it {Deep-learning Top
  Taggers or The End of QCD?}},  {\em JHEP} {\bf 05} (2017) 006,
  [\href{http://arxiv.org/abs/1701.08784}{{\tt arXiv:1701.08784}}].

\bibitem{Komiske:2016rsd}
P.~T. Komiske, E.~M. Metodiev, and M.~D. Schwartz, {\it {Deep learning in
  color: towards automated quark/gluon jet discrimination}},  {\em JHEP} {\bf
  01} (2017) 110, [\href{http://arxiv.org/abs/1612.01551}{{\tt
  arXiv:1612.01551}}].

\bibitem{deOliveira:2017pjk}
L.~de~Oliveira, M.~Paganini, and B.~Nachman, {\it {Learning Particle Physics by
  Example: Location-Aware Generative Adversarial Networks for Physics
  Synthesis}},  \href{http://arxiv.org/abs/1701.05927}{{\tt arXiv:1701.05927}}.

\bibitem{Komiske:2017ubm}
P.~T. Komiske, E.~M. Metodiev, B.~Nachman, and M.~D. Schwartz, {\it {Pileup
  Mitigation with Machine Learning (PUMML)}},
  \href{http://arxiv.org/abs/1707.08600}{{\tt arXiv:1707.08600}}.

\bibitem{Louppe:2017ipp}
G.~Louppe, K.~Cho, C.~Becot, and K.~Cranmer, {\it {QCD-Aware Recursive Neural
  Networks for Jet Physics}},  \href{http://arxiv.org/abs/1702.00748}{{\tt
  arXiv:1702.00748}}.

\bibitem{ATLAS-CONF-2017-064}
{\bf {ATLAS}} Collaboration, {\it {Performance of Top Quark and W Boson Tagging
  in Run 2 with ATLAS}},  {\em ATLAS-CONF-2017-064} (2017).

\bibitem{ATL-PHYS-PUB-2017-013}
{\bf {ATLAS}} Collaboration, {\it {Optimisation and performance studies of the
  ATLAS $b$-tagging algorithms for the 2017-18 LHC run}},  {\em
  ATL-PHYS-PUB-2017-013} (2017).

\bibitem{ATL-PHYS-PUB-2017-004}
{\bf {ATLAS}} Collaboration, {\it {Identification of Hadronically-Decaying W
  Bosons and Top Quarks Using High-Level Features as Input to Boosted Decision
  Trees and Deep Neural Networks in ATLAS at $\sqrt{s}$ = 13 TeV}},  {\em
  ATL-PHYS-PUB-2017-004} (2017).

\bibitem{CMS-DP-2017-005}
{\bf {CMS}} Collaboration, {\it {Heavy flavor identification at CMS with deep
  neural networks}},  {\em CMS-DP-2017-005} (Mar, 2017).

\bibitem{CMS-DP-2017-013}
{\bf CMS} Collaboration, {\it {CMS Phase 1 heavy flavour identification
  performance and developments}},  {\em CMS-DP-2017-013} (May, 2017).

\bibitem{ATL-PHYS-PUB-2017-003}
{\bf {ATLAS}} Collaboration, {\it {Identification of Jets Containing
  $b$-Hadrons with Recurrent Neural Networks at the ATLAS Experiment}},  {\em
  ATL-PHYS-PUB-2017-003} (2017).

\bibitem{Butter:2017cot}
A.~Butter, G.~Kasieczka, T.~Plehn, and M.~Russell, {\it {Deep-learned Top
  Tagging using Lorentz Invariance and Nothing Else}},
  \href{http://arxiv.org/abs/1707.08966}{{\tt arXiv:1707.08966}}.

\bibitem{Pearkes:2017hku}
J.~Pearkes, W.~Fedorko, A.~Lister, and C.~Gay, {\it {Jet Constituents for Deep
  Neural Network Based Top Quark Tagging}},
  \href{http://arxiv.org/abs/1704.02124}{{\tt arXiv:1704.02124}}.

\bibitem{Datta:2017rhs}
K.~Datta and A.~Larkoski, {\it {How Much Information is in a Jet?}},  {\em
  JHEP} {\bf 06} (2017) 073, [\href{http://arxiv.org/abs/1704.08249}{{\tt
  arXiv:1704.08249}}].

\bibitem{ATL-PHYS-PUB-2017-017}
{\bf {ATLAS}} Collaboration, {\it {Quark versus Gluon Jet Tagging Using Jet
  Images with the ATLAS Detector}},  {\em ATL-PHYS-PUB-2017-017} (2017).

\bibitem{CMS-DP-2017-027}
{\bf {CMS}} Collaboration, {\it {New Developments for Jet Substructure
  Reconstruction in CMS}},  {\em CMS-DP-2017-027} (2017).

\bibitem{Aad:2015ydr}
{\bf ATLAS} Collaboration, G.~Aad et~al., {\it {Performance of $b$-Jet
  Identification in the ATLAS Experiment}},  {\em JINST} {\bf 11} (2016),
  no.~04 P04008, [\href{http://arxiv.org/abs/1512.01094}{{\tt
  arXiv:1512.01094}}].

\bibitem{Chatrchyan:2012jua}
{\bf CMS} Collaboration, S.~Chatrchyan et~al., {\it {Identification of b-quark
  jets with the CMS experiment}},  {\em JINST} {\bf 8} (2013) P04013,
  [\href{http://arxiv.org/abs/1211.4462}{{\tt arXiv:1211.4462}}].

\bibitem{Aad:2014gea}
{\bf ATLAS} Collaboration, G.~Aad et~al., {\it {Light-quark and gluon jet
  discrimination in $pp$ collisions at $\sqrt{s}=7\mathrm {\ TeV}$ with the
  ATLAS detector}},  {\em Eur. Phys. J.} {\bf C74} (2014), no.~8 3023,
  [\href{http://arxiv.org/abs/1405.6583}{{\tt arXiv:1405.6583}}].

\bibitem{CMS:2013kfa}
{\bf CMS} Collaboration, {\it {Performance of quark/gluon discrimination in 8
  TeV pp data}},  Tech. Rep. CMS-PAS-JME-13-002, 2013.

\bibitem{CMS-DP-2016-070}
{\bf CMS} Collaboration, {\it {Performance of quark/gluon discrimination in 13
  TeV data}},  Tech. Rep. CMS-DP-2016-070, Nov, 2016.

\bibitem{Aad:2015rpa}
{\bf ATLAS} Collaboration, G.~Aad et~al., {\it {Identification of boosted,
  hadronically decaying W bosons and comparisons with ATLAS data taken at
  $\sqrt{s} = 8$ TeV}},  {\em Eur. Phys. J.} {\bf C76} (2016), no.~3 154,
  [\href{http://arxiv.org/abs/1510.05821}{{\tt arXiv:1510.05821}}].

\bibitem{Khachatryan:2014vla}
{\bf CMS} Collaboration, V.~Khachatryan et~al., {\it {Identification techniques
  for highly boosted W bosons that decay into hadrons}},  {\em JHEP} {\bf 12}
  (2014) 017, [\href{http://arxiv.org/abs/1410.4227}{{\tt arXiv:1410.4227}}].

\bibitem{Aad:2016pux}
{\bf ATLAS} Collaboration, G.~Aad et~al., {\it {Identification of high
  transverse momentum top quarks in $pp$ collisions at $\sqrt{s}$ = 8 TeV with
  the ATLAS detector}},  {\em JHEP} {\bf 06} (2016) 093,
  [\href{http://arxiv.org/abs/1603.03127}{{\tt arXiv:1603.03127}}].

\bibitem{CMS:2014fya}
{\bf CMS} Collaboration, {\it {Boosted Top Jet Tagging at CMS}},  Tech. Rep.
  CMS-PAS-JME-13-007, 2014.

\bibitem{Louppe:2016ylz}
G.~Louppe, M.~Kagan, and K.~Cranmer, {\it {Learning to Pivot with Adversarial
  Networks}},  \href{http://arxiv.org/abs/1611.01046}{{\tt arXiv:1611.01046}}.

\bibitem{scott2013}
G.~Blanchard, M.~Flaska, G.~Handy, S.~Pozzi, and C.~Scott, {\it Classification
  with asymmetric label noise: Consistency and maximal denoising},  {\em
  Electron. J. Statist.} {\bf 10} (2016), no.~2 2780--2824.

\bibitem{hernandez2015}
J.~Hern{\'a}ndez-Gonz{\'a}lez, I.~Inza, and J.~A. Lozano, {\it Weak supervision
  and other non-standard classification problems: a taxonomy},  {\em Pattern
  Recognition Letters} {\bf 69} (2016) 49--55.

\bibitem{Dery:2017fap}
L.~M. Dery, B.~Nachman, F.~Rubbo, and A.~Schwartzman, {\it {Weakly Supervised
  Classification in High Energy Physics}},  {\em JHEP} {\bf 05} (2017) 145,
  [\href{http://arxiv.org/abs/1702.00414}{{\tt arXiv:1702.00414}}].

\bibitem{quadrianto2009}
N.~Quadrianto, A.~J. Smola, T.~S. Caetano, and Q.~V. Le, {\it Estimating labels
  from label proportions},  {\em Journal of Machine Learning Research} {\bf 10}
  (2009), no.~Oct 2349--2374.

\bibitem{patrini2014almost}
G.~Patrini, R.~Nock, P.~Rivera, and T.~Caetano, {\it (almost) no label no cry},
   in {\em Advances in Neural Information Processing Systems}, pp.~190--198,
  2014.

\bibitem{Gras:2017jty}
P.~Gras, S.~Höche, D.~Kar, A.~Larkoski, L.~Lönnblad, S.~Plätzer,
  A.~Siódmok, P.~Skands, G.~Soyez, and J.~Thaler, {\it {Systematics of
  quark/gluon tagging}},  {\em JHEP} {\bf 07} (2017) 091,
  [\href{http://arxiv.org/abs/1704.03878}{{\tt arXiv:1704.03878}}].

\bibitem{Cohen:2017exh}
T.~Cohen, M.~Freytsis, and B.~Ostdiek, {\it {(Machine) Learning to Do More with
  Less}},  \href{http://arxiv.org/abs/1706.09451}{{\tt arXiv:1706.09451}}.

\bibitem{nplemma}
J.~Neyman and E.~S. Pearson, {\it On the problem of the most efficient tests of
  statistical hypotheses},  in {\em Breakthroughs in statistics}, pp.~73--108.
\newblock Springer, 1992.

\bibitem{natarajan2013learning}
N.~Natarajan, I.~S. Dhillon, P.~K. Ravikumar, and A.~Tewari, {\it Learning with
  noisy labels},  in {\em Advances in neural information processing systems},
  pp.~1196--1204, 2013.

\bibitem{scott2015rate}
C.~Scott, {\it A rate of convergence for mixture proportion estimation, with
  application to learning from noisy labels},  in {\em Artificial Intelligence
  and Statistics}, pp.~838--846, 2015.

\bibitem{Cranmer:2015bka}
K.~Cranmer, J.~Pavez, and G.~Louppe, {\it {Approximating Likelihood Ratios with
  Calibrated Discriminative Classifiers}},
  \href{http://arxiv.org/abs/1506.02169}{{\tt arXiv:1506.02169}}.

\bibitem{Nilles:1980ys}
H.~P. Nilles and K.~H. Streng, {\it {Quark - Gluon Separation in Three Jet
  Events}},  {\em Phys. Rev.} {\bf D23} (1981) 1944.

\bibitem{Jones:1988ay}
L.~M. Jones, {\it {Tests for Determining the Parton Ancestor of a Hadron Jet}},
   {\em Phys. Rev.} {\bf D39} (1989) 2550.

\bibitem{Fodor:1989ir}
Z.~Fodor, {\it {How to See the Differences Between Quark and Gluon Jets}},
  {\em Phys. Rev.} {\bf D41} (1990) 1726.

\bibitem{Jones:1990rz}
L.~Jones, {\it {Towards a Systematic Jet Classification}},  {\em Phys. Rev.}
  {\bf D42} (1990) 811--814.

\bibitem{Lonnblad:1990qp}
L.~L{\"o}nnblad, C.~Peterson, and T.~Rognvaldsson, {\it {Using neural networks
  to identify jets}},  {\em Nucl. Phys.} {\bf B349} (1991) 675--702.

\bibitem{Pumplin:1991kc}
J.~Pumplin, {\it {How to tell quark jets from gluon jets}},  {\em Phys. Rev.}
  {\bf D44} (1991) 2025--2032.

\bibitem{Gallicchio:2011xq}
J.~Gallicchio and M.~D. Schwartz, {\it {Quark and Gluon Tagging at the LHC}},
  {\em Phys. Rev. Lett.} {\bf 107} (2011) 172001,
  [\href{http://arxiv.org/abs/1106.3076}{{\tt arXiv:1106.3076}}].

\bibitem{Gallicchio:2012ez}
J.~Gallicchio and M.~D. Schwartz, {\it {Quark and Gluon Jet Substructure}},
  {\em JHEP} {\bf 04} (2013) 090, [\href{http://arxiv.org/abs/1211.7038}{{\tt
  arXiv:1211.7038}}].

\bibitem{Larkoski:2014pca}
A.~J. Larkoski, J.~Thaler, and W.~J. Waalewijn, {\it {Gaining (Mutual)
  Information about Quark/Gluon Discrimination}},  {\em JHEP} {\bf 11} (2014)
  129, [\href{http://arxiv.org/abs/1408.3122}{{\tt arXiv:1408.3122}}].

\bibitem{Aad:2016oit}
{\bf ATLAS} Collaboration, G.~Aad et~al., {\it {Measurement of the
  charged-particle multiplicity inside jets from $\sqrt{s}=8$ TeV $pp$
  collisions with the ATLAS detector}},  {\em Eur. Phys. J.} {\bf C76} (2016),
  no.~6 322, [\href{http://arxiv.org/abs/1602.00988}{{\tt arXiv:1602.00988}}].

\bibitem{Gallicchio:2011xc}
J.~Gallicchio and M.~D. Schwartz, {\it {Pure Samples of Quark and Gluon Jets at
  the LHC}},  {\em JHEP} {\bf 10} (2011) 103,
  [\href{http://arxiv.org/abs/1104.1175}{{\tt arXiv:1104.1175}}].

\bibitem{Butterworth:2008iy}
J.~M. Butterworth, A.~R. Davison, M.~Rubin, and G.~P. Salam, {\it {Jet
  substructure as a new Higgs search channel at the LHC}},  {\em Phys. Rev.
  Lett.} {\bf 100} (2008) 242001, [\href{http://arxiv.org/abs/0802.2470}{{\tt
  arXiv:0802.2470}}].

\bibitem{Ellis:2009su}
S.~D. Ellis, C.~K. Vermilion, and J.~R. Walsh, {\it {Techniques for improved
  heavy particle searches with jet substructure}},  {\em Phys. Rev.} {\bf D80}
  (2009) 051501, [\href{http://arxiv.org/abs/0903.5081}{{\tt
  arXiv:0903.5081}}].

\bibitem{Ellis:2009me}
S.~D. Ellis, C.~K. Vermilion, and J.~R. Walsh, {\it {Recombination Algorithms
  and Jet Substructure: Pruning as a Tool for Heavy Particle Searches}},  {\em
  Phys. Rev.} {\bf D81} (2010) 094023,
  [\href{http://arxiv.org/abs/0912.0033}{{\tt arXiv:0912.0033}}].

\bibitem{Krohn:2009th}
D.~Krohn, J.~Thaler, and L.-T. Wang, {\it {Jet Trimming}},  {\em JHEP} {\bf 02}
  (2010) 084, [\href{http://arxiv.org/abs/0912.1342}{{\tt arXiv:0912.1342}}].

\bibitem{Dasgupta:2013ihk}
M.~Dasgupta, A.~Fregoso, S.~Marzani, and G.~P. Salam, {\it {Towards an
  understanding of jet substructure}},  {\em JHEP} {\bf 09} (2013) 029,
  [\href{http://arxiv.org/abs/1307.0007}{{\tt arXiv:1307.0007}}].

\bibitem{Larkoski:2014wba}
A.~J. Larkoski, S.~Marzani, G.~Soyez, and J.~Thaler, {\it {Soft Drop}},  {\em
  JHEP} {\bf 05} (2014) 146, [\href{http://arxiv.org/abs/1402.2657}{{\tt
  arXiv:1402.2657}}].

\bibitem{Frye:2016okc}
C.~Frye, A.~J. Larkoski, M.~D. Schwartz, and K.~Yan, {\it {Precision physics
  with pile-up insensitive observables}},
  \href{http://arxiv.org/abs/1603.06375}{{\tt arXiv:1603.06375}}.

\bibitem{Frye:2016aiz}
C.~Frye, A.~J. Larkoski, M.~D. Schwartz, and K.~Yan, {\it {Factorization for
  groomed jet substructure beyond the next-to-leading logarithm}},  {\em JHEP}
  {\bf 07} (2016) 064, [\href{http://arxiv.org/abs/1603.09338}{{\tt
  arXiv:1603.09338}}].

\bibitem{Berger:2003iw}
C.~F. Berger, T.~Kucs, and G.~F. Sterman, {\it {Event shape / energy flow
  correlations}},  {\em Phys. Rev.} {\bf D68} (2003) 014012,
  [\href{http://arxiv.org/abs/hep-ph/0303051}{{\tt hep-ph/0303051}}].

\bibitem{Almeida:2008yp}
L.~G. Almeida, S.~J. Lee, G.~Perez, G.~F. Sterman, I.~Sung, et~al., {\it
  {Substructure of high-$p_T$ Jets at the LHC}},  {\em Phys. Rev.} {\bf D79}
  (2009) 074017, [\href{http://arxiv.org/abs/0807.0234}{{\tt
  arXiv:0807.0234}}].

\bibitem{Ellis:2010rwa}
S.~D. Ellis, C.~K. Vermilion, J.~R. Walsh, A.~Hornig, and C.~Lee, {\it {Jet
  Shapes and Jet Algorithms in SCET}},  {\em JHEP} {\bf 1011} (2010) 101,
  [\href{http://arxiv.org/abs/1001.0014}{{\tt arXiv:1001.0014}}].

\bibitem{Bertolini:2013iqa}
D.~Bertolini, T.~Chan, and J.~Thaler, {\it {Jet Observables Without Jet
  Algorithms}},  {\em JHEP} {\bf 1404} (2014) 013,
  [\href{http://arxiv.org/abs/1310.7584}{{\tt arXiv:1310.7584}}].

\bibitem{Larkoski:2014uqa}
A.~J. Larkoski, D.~Neill, and J.~Thaler, {\it {Jet Shapes with the Broadening
  Axis}},  {\em JHEP} {\bf 1404} (2014) 017,
  [\href{http://arxiv.org/abs/1401.2158}{{\tt arXiv:1401.2158}}].

\bibitem{Salam:WTAUnpublished}
G.~Salam, {\it {$E_t^\infty$ Scheme}},  {\em Unpublished}.

\bibitem{Chatrchyan:2012sn}
{\bf CMS} Collaboration, S.~Chatrchyan et~al., {\it {Search for a Higgs boson
  in the decay channel $H$ to ZZ(*) to $q$ qbar $\ell^-$ l+ in $pp$ collisions
  at $\sqrt{s}=7$ TeV}},  {\em JHEP} {\bf 04} (2012) 036,
  [\href{http://arxiv.org/abs/1202.1416}{{\tt arXiv:1202.1416}}].

\bibitem{Badger:2016bpw}
J.~R. Andersen et~al., {\it {Les Houches 2015: Physics at TeV Colliders
  Standard Model Working Group Report}},  in {\em {9th Les Houches Workshop on
  Physics at TeV Colliders (PhysTeV 2015) Les Houches, France, June 1-19,
  2015}}, 2016.
\newblock \href{http://arxiv.org/abs/1605.04692}{{\tt arXiv:1605.04692}}.

\bibitem{Sjostrand:2007gs}
T.~Sjostrand, S.~Mrenna, and P.~Z. Skands, {\it {A Brief Introduction to PYTHIA
  8.1}},  {\em Comput. Phys. Commun.} {\bf 178} (2008) 852--867,
  [\href{http://arxiv.org/abs/0710.3820}{{\tt arXiv:0710.3820}}].

\bibitem{Cacciari:2008gp}
M.~Cacciari, G.~P. Salam, and G.~Soyez, {\it {The Anti-k(t) jet clustering
  algorithm}},  {\em JHEP} {\bf 04} (2008) 063,
  [\href{http://arxiv.org/abs/0802.1189}{{\tt arXiv:0802.1189}}].

\bibitem{Cacciari:2011ma}
M.~Cacciari, G.~P. Salam, and G.~Soyez, {\it {FastJet User Manual}},  {\em Eur.
  Phys. J.} {\bf C72} (2012) 1896, [\href{http://arxiv.org/abs/1111.6097}{{\tt
  arXiv:1111.6097}}].

\bibitem{keras}
F.~Chollet, ``Keras.'' \url{https://github.com/fchollet/keras}, 2017.

\bibitem{tensorflow}
M.~Abadi, P.~Barham, J.~Chen, Z.~Chen, A.~Davis, J.~Dean, M.~Devin,
  S.~Ghemawat, G.~Irving, M.~Isard, et~al., {\it Tensorflow: A system for
  large-scale machine learning.},  in {\em OSDI}, vol.~16, pp.~265--283, 2016.

\bibitem{heuniform}
K.~He, X.~Zhang, S.~Ren, and J.~Sun, {\it Delving deep into rectifiers:
  Surpassing human-level performance on imagenet classification},  in {\em
  Proceedings of the IEEE international conference on computer vision},
  pp.~1026--1034, 2015.

\bibitem{adam}
D.~Kingma and J.~Ba, {\it Adam: A method for stochastic optimization},
  \href{http://arxiv.org/abs/1412.6980}{{\tt arXiv:1412.6980}}.

\end{thebibliography}\endgroup

\end{document}